\def\01{\{0,1\}}
\newcommand{\R}{\mathbb{R}}
\newcommand{\ceil}[1]{\lceil{#1}\rceil}
\newcommand{\floor}[1]{\lfloor{#1}\rfloor}
\newcommand{\eps}{\varepsilon}
\newcommand{\ket}[1]{|#1\rangle}
\newcommand{\bra}[1]{\langle#1|}
\newcommand{\ketbra}[2]{|#1\rangle\langle#2|}
\newcommand{\braket}[2]{\langle#1|#2\rangle} 
\newcommand{\Tr}{\mbox{\rm Tr}}
\newcommand{\norm}[1]{\mbox{$\parallel{#1}\parallel$}}
\newcommand{\Exp}{\mathbb{E}}
\newtheorem{theorem}{Theorem}
\newtheorem{lemma}[theorem]{Lemma}
\newtheorem{proposition}[theorem]{Proposition}
\newcommand{\fidel}{\eta}
\begin{document}

\title{Average-Case Verification of the Quantum Fourier Transform Enables Worst-Case Phase Estimation}
\author{Noah Linden}\affiliation{School of Mathematics, University of Bristol. {\tt n.linden@bristol.ac.uk}}
\author{Ronald de Wolf}\affiliation{QuSoft, CWI and University of Amsterdam, the Netherlands.  {\tt rdewolf@cwi.nl}}
\date{}
\maketitle

\begin{abstract}
The quantum Fourier transform (QFT) is a key primitive for quantum computing that is typically used as a subroutine within a larger computation, for instance for phase estimation.  As such, we may have little control over the state that is input to the QFT. Thus, in implementing a good QFT, we may imagine that it needs to perform well on arbitrary input states.  \emph{Verifying} this worst-case correct behaviour of a QFT-implementation would be exponentially hard (in the number of qubits) in general, raising the concern that this verification would be impossible in practice on any useful-sized system.   In this paper we show that, in fact, we only need to have good \emph{average-case} performance of the QFT to achieve good \emph{worst-case} performance for key tasks---phase estimation, period finding and amplitude estimation.  Further we give a very efficient procedure to verify this required average-case behaviour of the QFT.
\end{abstract}

\section{Introduction}

\subsection{Verification of quantum circuits}

Massive efforts are currently being expended around the world on building large quantum computers, in academia and industry. Because of the fragility of quantum hardware and the quantum states it produces, it is crucial to be able to \emph{test} that the hardware works as advertised. Such a test may involve some quantum hardware itself, but should be more ``lightweight'' than the procedure that is being tested, in order to avoid circularity.

There is an important issue with testing that is sometimes overlooked in high-level discussions of the topic: 
if the circuit of interest is a subroutine in a larger computation, then we may have little control of the state that is input to it; thus we would like to verify that it works on the \emph{worse-case} input state.  However,
we can typically only test its behavior for an \emph{average-case} input state, because there are far too many possible input states to test them all. In general, testing worst-case correctness of a given $n$-qubit circuit would take resources that scale exponentially in $n$. 
This means that efficient verification of worst-case correctness typically requires additional assumptions, ranging from restrictions on the class of circuits one is verifying (for instance Clifford circuits~\cite{flammia&liu:fidelityestimation,SLP:practical,linden&wolf:lightweight}) to cryptographic assumptions (as in Mahadev's approach~\cite{mahadev:clasveri}, which also assumes the computation starts with a fixed initial state).
For further discussion and pointers to related work on verification of quantum hardware, we refer to our recent paper~\cite{linden&wolf:lightweight} and to the general survey~\cite{eisertetal:certificationsurvey}.%
\footnote{The issue of average-case vs worst-case behavior has also recently received attention in the area of Hamiltonian simulation~\cite{Zhaoetal:HamSimRandom,chen&brandao:trotter}.}

In this paper we focus on the situation where we want to apply a quantum Fourier transform (QFT), or its inverse, within the context of a larger quantum computation that we already trust to a sufficient extent. We will give a lightweight procedure for verifying certain average-case behaviour of the QFT circuit, given the ability only to apply it as a black-box.  We will then show that this enables us to use the QFT to achieve good \emph{worst-case} performance for key tasks---phase estimation, period finding and amplitude estimation.

\subsection{The quantum Fourier transform}

The quantum Fourier transform is one of the most important (possibly \emph{the} most important) component of quantum algorithms. It is key in Shor's factoring algorithm~\cite{shor:factoring} and in the standard approach to amplitude estimation~\cite{bhmt:countingj}, which generalises Grover's search algorithm~\cite{grover:search} and which is an important subroutine in many other quantum algorithms.\footnote{It is often possible to avoid doing the full QFT in these applications. For instance, one can do phase estimation in a bit-by-bit manner~\cite{kitaev:stabilizer} or by using the block-encoding framework of~\cite{gilyenea:svtrans} as done in~\cite{MRTC:grandunif,rall:phaseest}; and one can do amplitude estimation by judiciously chosen numbers of Grover iterations~\cite{aaronson&rall:qcounting}. However, replacing the QFT by ``something else'' raises the question of the verification of those ``something else'' components. In this paper our goal is not avoid the QFT but to show that it can be efficiently tested for average-case correctness, and then used in worst-case applications. We feel our results should actually favor the use of the QFT as a component in quantum algorithms: in this paper we come not to bury the QFT, but to praise it.}
Let $N=2^n$ and $\omega_N=e^{2\pi i/N}$.
The $n$-qubit QFT is the unitary $F_N$ that maps $n$-bit basis state $\ket{k}$ as
\[
\ket{k}\mapsto \ket{\hat{k}}=\frac{1}{\sqrt{N}}\sum_{j=0}^{N-1}\omega_N^{jk}\ket{j},
\]
where the ``$jk$'' in the exponent denotes multiplication of two $n$-bit integers.
Interestingly, the complicated-looking Fourier basis state $\ket{\hat{k}}$ is actually a product state of $n$ individual qubits:
\begin{equation}\label{eq:QFTproductstate}
\ket{\hat{k}}=\bigotimes_{\ell=1}^{n}\frac{1}{\sqrt{2}}\left(\ket{0}+e^{2\pi i k/2^\ell}\ket{1}\right)
\end{equation}
Leveraging this product structure, there is a well-known circuit of $O(n^2)$ gates that implements $F_N$ exactly. It uses $n$ Hadamard gates, $O(n^2)$ controlled versions of 
\[
R_s=\left(\begin{array}{cc}1 & 0\\ 0 & e^{2\pi i/2^s}\end{array}\right)
\]
for different integers~$s$, and a few SWAP gates at the end~\cite[Section~5.1]{nielsen&chuang:qc}.
One can also obtain an \emph{approximate} circuit from this with only $O(n\log n)$ gates, by dropping the $R_s$ gates where $s$ is bigger than $c\log n$ for some constant~$c$~\cite{coppersmith:fourier} ($R_s$ gates with large $s$ are very close to the identity, so dropping them incurs very little error).  The resulting circuit differs from $F_N$ by only an inverse-polynomially small error in operator norm.

One may also consider the inverse QFT $F_N^{-1}$, where the phases are $\omega_N^{-jk}$ instead of $\omega_N^{jk}$. This has equally efficient exact and approximate quantum circuits, since we can just reverse a circuit for $F_N$ and invert its gates to get a circuit for $F_N^{-1}$.

\subsection{Testing a purported QFT or QFT$^{-1}$}\label{sec:introQFTtesting}

In this paper we are interested in the situation where we have a channel\footnote{A \emph{channel} is a completely positive trace-preserving map on density matrices, in our case taking $n$-qubit mixed states to $n$-qubit mixed states. An $n$-qubit unitary is a special case of this. If channel $C$ is run on pure state $\ket{\psi}$, then we will use the notation $C(\ket{\psi})$ to abbreviate the resulting state  $C(\ketbra{\psi}{\psi})$.} 
$C$ that we can run only as a black-box on states $\ket{\psi}$ of our choice; $C$ is supposed to implement $F_N$, or  $F_{N}^{-1}$ depending on the application.
We would like to test to what extent $C$ is correct.
It is not practical to test whether $C(\ket{\psi})$ is approximately the right state for \emph{all} possible $\ket{\psi}$, since $C$ could differ from $F^{-1}_N$ in only one ``direction''; in fact, testing whether $C(\ket{\hat{k}})$ is close to $F_N^{-1}\ket{\hat{k}}=\ket{k}$ for \emph{all} $k\in\01^n$ requires $\Omega(\sqrt{2^n})$ runs of~$C$.\footnote{\label{footnote2nLB}This follows from the well-known fact that we need $\Omega(\sqrt{2^n})$ queries to a string $x\in\01^{2^n}$ to decide whether $x=0^{2^n}$~\cite{bbbv:str&weak}, as follows.
If we can make queries $O_x:\ket{k}\mapsto(-1)^{x_k}\ket{k}$ and we define $C$ as the $n$-qubit unitary $O_xF_N^{-1}$, then $C=F_N^{-1}$ if $x=0^{2^n}$, and otherwise $C$ is far from $F_N^{-1}$ on at least one input state $\ket{\hat{k}}$.
Accordingly, if we can distinguish those two cases with $T$ runs of $C$, then $T$ queries to $x$ can decide whether $x=0^{2^n}$, which implies $T$ must be $\Omega(\sqrt{2^n})$. This lower bound is optimal, since we can Grover search~\cite{grover:search} over all $k\in\01^n$ to look for one where $C\ket{\hat{k}}$ differs significantly from $F_N^{-1}\ket{\hat{k}}=\ket{k}$.}
However, it turns out that we can efficiently test whether $C(\ket{\hat{k}})$ and $F_{N}^{-1}\ket{\hat{k}}$ are close on \emph{average} over all $k\in\01^n$.
Fortunately, good performance on most Fourier basis states~$\ket{\hat{k}}$ suffices for the applications we care about in the rest of the paper, which run the inverse QFT on individual Fourier basis states or mixtures thereof, or on superpositions dominated by a small number of Fourier basis states.

 A Fourier basis state~$\ket{\hat{k}}$ is a product state by Eq.~(\ref{eq:QFTproductstate}), so it is relatively easy (``lightweight'') to prepare, at least approximately. Each qubit in the product state~$\ket{\hat{k}}$ is of the form $\frac{1}{\sqrt{2}}(\ket{0}+e^{i\phi}\ket{1})$ for some phase $\phi$ that depends on $k$ and on the location of the qubit. It suffices to prepare each of those qubits with $O(\log n)$ bits of precision\footnote{Each bit of precision can be rotated in with one single-qubit gate~$R_s$. To see that $2\log n$ bits of precision in each phase suffice, note that this gives a fidelity $\geq 1-O(1/n^2)$ per qubit, which (because we are dealing with product states) multiplies out to a fidelity $(1-O(1/n^2))^n\geq 1-O(1/n)$ for the $n$-qubit product state as a whole.}
 in the phase $\phi$ in order to prepare $\ket{\hat{k}}$ up to inverse-polynomially small error.
One might be worried that this preparation effectively requires us to do an approximate QFT, which would defeat our purpose of testing a purported QFT black-box~$C$; however, it is much easier to prepare known product states such as Fourier basis states than it is to implement a QFT on an arbitrary unknown state. In particular, in regimes starting from a few dozen qubits (which is the current state of the art of quantum hardware), preparing a Fourier basis state to sufficient precision seems doable while tomography on a channel~$C$ would already be prohibitively expensive. 

Assuming we can prepare Fourier basis states~$\ket{\hat{k}}$ sufficiently precisely,
in Section~\ref{sec:QFTtest} we give a simple test that approximates the average error (defined as infidelity, i.e., 1 minus fidelity) which $C$ makes on Fourier basis states, up to additive approximation error $\eps$. Our procedure uses $O(1/\eps^2)$ runs, each of which prepares an $n$-qubit product state (namely a random Fourier basis state), applies $C$ to it, and measures the resulting $n$-qubit state in the computational basis.

\subsection{Using an average-case-correct QFT$^{-1}$ for worst-case phase estimation}

Suppose we have a channel $C$ that has passed our test, so we can be confident that $C$ is close in an average-case sense to the inverse QFT. What can we use $C$ for?

Phase estimation, originally due to Kitaev~\cite{kitaev:stabilizer}, is the following application of the inverse QFT.
Suppose we can apply an $m$-qubit unitary $U$ in a controlled manner, and are given an eigenstate $\ket{\phi}$ of $U$ with eigenvalue $e^{2\pi i\theta}$ for some unknown $\theta\in[0,1)$.
The goal is to estimate~$\theta$.
Standard phase estimation (reviewed in Section~\ref{ssec:phasest} below) obtains an $n$-bit approximation to $\theta$ by using $O(2^n)$ controlled applications of $U$ in order to (exactly or approximately) prepare the state  $F_N\ket{\theta_1\ldots\theta_n}$, where $\theta_1\ldots\theta_n$ are the $n$ most significant bits of the binary expansion of $\theta=0.\theta_1\ldots\theta_n\ldots$ 
Applying an inverse QFT then gives us~$\theta$ itself, or at least a good approximation of~$\theta$.

Now suppose our channel~$C$ for the inverse QFT is only average-case correct. In that case phase estimation will fail if $F_N\ket{\theta_1\ldots\theta_n}$ happens to be one of the Fourier basis states on which $C$ fails significantly. However, in Section~\ref{sec:worsttoaverage} we show how an average-case-correct $C$ actually suffices to implement phase estimation \emph{in the worst case} (assuming the other components of phase estimation work sufficiently well). We do this by a simple trick whereby we randomise the phase we are estimating. Then we can use our average-case-correct~$C$ to recover a good approximation to that randomised phase with good probability, and afterwards undo the randomisation to obtain a good approximation to $\theta$ itself.
A moderately small upper bound on the average-case error in the inverse QFT is good enough to make this work: in the case where the eigenphase~$\theta$ can be written exactly with $n$ bits of precision we can tolerate an average infidelity up to almost 1/2 (see end of Section~\ref{ssec:PEaveragenbitcase}), while for the more general case where $\theta$ needs more than $n$ bits of precision we can tolerate an average infidelity up to 0.041  (see end of Section~\ref{ssec:PEaveragegeneralcase}).

The advantage of this approach is that we can efficiently test whether a given $C$ has small \emph{average-case} error, while we cannot efficiently test whether $C$ has small \emph{worst-case} error.
If the average error is a sufficiently small constant, then also a small constant approximation error~$\eps$ suffices for this test, hence only a constant number of runs of $C$ suffices to achieve high confidence in the approximate correctness of the inverse QFT.

\subsection{Applications}

As mentioned, two of the most important quantum algorithms known to date are Shor's algorithm for integer factoring~\cite{shor:factoring}, whose quantum core is period-finding, and amplitude estimation~\cite{bhmt:countingj}. Both rely on an inverse quantum Fourier transform, and both may fail miserably if that inverse Fourier transform happens to fail on the particular state that the algorithm applies it to. In Section~\ref{sec:applications} we show that both algorithms can still be made to work with high success probability if we only have an average-case-correct inverse QFT at our disposal. 

\subsection{Pros and cons of our approach}
Before going into technical details let us clarify and emphasize several aspects of our approach. 

First, our approach is only relevant for implementations of QFT (or QFT$^{-1}$) that work reasonably well for most (inverse) Fourier basis states. This typically won't be true in a setting where a few of the gates in the circuit can be completely wrong. However, our approach is relevant for a QFT circuit where many (maybe even all) of the gates are \emph{slightly} wrong in various  ways, for instance if the per-gate error times the total number of gates ($O(n\log n)$ for the approximate QFT circuit) is at most a constant. If the number~$n$ of qubits is a few dozen, then a per-gate error on the order of $1/n\log n$ is not very far from current technology.

Second, apart from small average-case error (infidelity averaged over Fourier basis states) we don't assume much about~$C$. This average-case error could arise in many ways, which could even be picked by our adversary: $C$ could make a small error on most Fourier basis states, or be completely wrong on, say, a small constant fraction of those states, or anything in between.
Our approach can deal with all these cases.
We are \emph{not} assuming any relatively benign and smooth error model such as depolarizing noise. 
This also illustrates the difference between worst-case and average-case error: $C$ being completely wrong on some (maybe even a constant fraction of) Fourier basis states means it has terrible worst-case error, and yet our average-case to worst-case reduction shows that it can still be turned into something quite serviceable for worst-case applications. Even in the benign case of random rather than adversarial noise, the action of $C$ is still likely to be worse on some Fourier basis states than on others, and our worst-case-to-average-case approach has the benefit of smoothing this out, reducing the worst-case error probability. 

Third, if a channel $C$ passed our test then we can conclude that it works well on an average Fourier basis state and hence (by our average-case-to-worst-case reduction) can be made to work well on \emph{all} Fourier basis states. We can certainly \emph{not} conclude that $C$ will work well on arbitrary superpositions of Fourier basis states, since every state is a superposition of Fourier basis states, and testing that $C$ works well on every possible state requires an exponential number of runs of~$C$ (see footnote~\ref{footnote2nLB}).
So a $C$ that passed our test cannot just be used in every application of the inverse QFT.
Fortunately, in the case of phase estimation, the inverse QFT will be applied to an individual Fourier basis state, or a mixture of Fourier basis states, or to a superposition state dominated by $O(1)$ Fourier basis states. As we prove in Sections~\ref{ssec:PEaveragenbitcase} and~\ref{ssec:PEaveragegeneralcase},
in such cases a $C$ that passed our test still works well enough.

Fourth, for the applications related to phase estimation, let us emphasize that we assume the state is measured in the computational basis right after the inverse QFT and all earlier non-QFT parts of the algorithm leading up to that state are essentially perfect. This is a strong assumption.
For factoring via period-finding, the preparation of the periodic state involves a circuit for modular exponentiation which uses polynomially more gates (and hence is more prone to error) than the inverse QFT.
Similarly, to do amplitude estimation with $n$ bits of precision one can use an $n$-qubit approximate inverse QFT circuit with $O(n\log n)$ gates, which pales in significance (and error-proneness) compared to the roughly $2^n$ controlled Grover iterations that are done prior to the inverse QFT.
Nevertheless, we feel it is a sensible modular approach to try to isolate parts of important algorithms that can be tested by themselves in a lightweight manner. The fact that the average-case error of the QFT can be efficiently tested in such a lightweight manner, and that small average-case error suffices for many of its applications, should make the QFT a more attractive component to use in larger algorithms.

\section{Testing average-case correctness of $F_N^{-1}$ on Fourier basis states}\label{sec:QFTtest}

In this section we show how one can efficiently test, in a lightweight manner, that a given quantum channel~$C$ is close to the $n$-qubit inverse Fourier transform $F_N^{-1}$ on an average Fourier basis state. We can test average-case closeness to $F_N$ completely analogously, but for concreteness we focus on $F_N^{-1}$ in this section.
We give a procedure to estimate the \emph{infidelity} between~$C$ and $F_N^{-1}$, averaged over the Fourier basis states, which is our measure of average-case error here.
The fidelity between mixed states $\rho$ and $\sigma$ is defined as
\[
F(\rho,\sigma)=\Tr\left(\sqrt{\rho^{1/2}\sigma\rho^{1/2}}\right)^2.
\]
Fidelity is symmetric. It is~0 if $\rho$ and $\sigma$ are orthogonal, it is~1 if they are equal, and otherwise it lies in $(0,1)$.
If $\sigma=\ketbra{\psi}{\psi}$ is pure, then $F(\rho,\ket{\psi})=\bra{\psi}\rho\ket{\psi}$.
The \emph{in}fidelity between $\rho$ and $\sigma$ is defined as 1 minus fidelity. We now show that average infidelity of a purported inverse QFT is relatively easy to estimate.

\begin{theorem}\label{th:testQFTmixed}
Let $C$ be a channel from $n$ qubits to $n$ qubits, $\ket{\hat{k}}$ be a uniformly random Fourier basis state, and define the average infidelity between $C$ and $F_N^{-1}$ by
\[
\fidel=\Exp_{k}[1-F(C(\ket{\hat{k}})\,,F_{N}^{-1}\ket{\hat{k}})\,].
\]
There exists a procedure that estimates~$\fidel$ up to additive error $\eps$, with success probability $1-\delta$, using
$O(\log(1/\delta)/\eps^2)$ runs, each of which prepares an $n$-qubit product state, runs $C$ on it, and measures the resulting $n$-qubit state in the computational basis.
\end{theorem}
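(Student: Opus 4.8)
The plan is to reduce the estimation of the average infidelity to elementary Monte Carlo estimation of the mean of a Bernoulli random variable. The crucial first observation is that since $F_N$ maps $\ket{k}$ to $\ket{\hat{k}}$, the inverse satisfies $F_N^{-1}\ket{\hat{k}}=\ket{k}$, so the target state against which we measure fidelity is a \emph{computational basis state}, hence pure. Using the stated fact that fidelity with a pure state $\ketbra{\psi}{\psi}$ is $F(\rho,\ket{\psi})=\bra{\psi}\rho\ket{\psi}$, I would rewrite
\[
F(C(\ket{\hat{k}})\,,F_N^{-1}\ket{\hat{k}})=\bra{k}C(\ket{\hat{k}})\ket{k}.
\]
This is exactly the probability that a computational-basis measurement of the output state $C(\ket{\hat{k}})$ returns the outcome $k$.

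Consequently the infidelity $1-F(C(\ket{\hat{k}}),\ket{k})$ equals the probability that the measurement returns some outcome $j\neq k$. This suggests the following single experiment, which produces an unbiased sample of $\fidel$: draw $k$ uniformly at random, prepare the product state $\ket{\hat{k}}$ (a product state by Eq.~(\ref{eq:QFTproductstate})), run $C$, measure in the computational basis to obtain an outcome $j$, and output the indicator bit $X=[j\neq k]$. Taking the expectation over both the uniform choice of $k$ and the measurement randomness gives $\Exp[X]=\fidel$ exactly, by linearity and the identity above.

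To finish, I would run this experiment $T=O(\log(1/\delta)/\eps^2)$ independent times and output the empirical mean $\bar X=\frac{1}{T}\sum_t X_t$. Since each $X_t$ is a $\{0,1\}$-valued Bernoulli variable with mean $\fidel$, a Hoeffding (or Chernoff) bound gives $\Pr[\,|\bar X-\fidel|>\eps\,]\le 2e^{-2T\eps^2}\le\delta$ for the stated $T$. Each run requires only preparing an $n$-qubit product state, one application of the black-box channel $C$, and one computational-basis measurement, matching the claimed resource bounds.

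The conceptual heart of the argument — and essentially the only nonroutine step — is the reduction itself: recognising that because the target states are computational basis states, the fidelity collapses to a measurement probability, turning what looks a priori like an expensive tomography-type task into ordinary classical mean estimation. The concentration step is entirely standard. The one caveat I would flag is that in practice $\ket{\hat{k}}$ can only be prepared approximately; the earlier discussion shows that $O(\log n)$ bits of precision per qubit keep the preparation infidelity inverse-polynomially small, so this contributes negligibly to the estimate and does not affect the stated bounds.
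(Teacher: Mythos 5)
Your proof is correct and follows essentially the same route as the paper's own: exploit that $F_N^{-1}\ket{\hat{k}}=\ket{k}$ is a pure computational basis state so the fidelity becomes the measurement probability $\bra{k}C(\ket{\hat{k}})\ket{k}$, then estimate the resulting Bernoulli mean by sampling with a Chernoff/Hoeffding bound. Your closing caveat about approximate preparation of $\ket{\hat{k}}$ matches the paper's discussion in Section~\ref{sec:introQFTtesting} and is a sensible addition.
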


\begin{proof}
Choose $k\in\01^n$ uniformly at random and prepare $n$-qubit product state $\ket{\hat{k}}=F_N\ket{k}$.
Run $C$ on $\ket{\hat{k}}$ and measure the resulting state in the computational basis. Output~1 if the $n$-bit measurement outcome is~$k$, and output~0 otherwise.
Because $F_{N}^{-1}\ket{\hat{k}}=\ket{k}$ is a pure state, we have
\[
\fidel=\Exp_k[\,1-F(C(\ket{\hat{k}})\,,\ket{k})\,]=\Exp_k[\,1-\bra{k}\,C(\ket{\hat{k}})\,\ket{k}\,]=1-\Pr[\,{\rm output}~1\,]=\Pr[\,{\rm output}~0\,].
\]
The Chernoff bound implies that if we repeat this procedure
$r=O(\log(1/\delta)/\eps^2)$ times, then the frequency of 0s among the $r$ output bits equals~$\fidel$ up to~$\pm\eps$, except with probability $\leq \delta$.\footnote{The Chernoff bound actually implies something slightly stronger for the relevant case where $\fidel$ is close to~0 (i.e., where $C$ works reasonably well), namely that $r=O(\fidel\log(1/\delta)/\eps^2)$  repetitions suffice. In particular, if $\eps$ is set to a small constant times $\fidel$, then $r=O(\log(1/\delta)/\eps)$ repetitions suffice rather than $O(\log(1/\delta)/\eps^2)$.}
\end{proof}

 Thus we have a procedure to test whether the average infidelity of our purported black-box for the inverse QFT is small.
The procedure has modest overhead: $O(\log(1/\delta)/\eps^2)$ runs, each involving a preparation of a Fourier basis state $\ket{\hat{k}}$, one run of $C$, and one $n$-qubit measurement in the computational basis.
Referring back to the discussion in the penultimate paragraph of Section~\ref{sec:introQFTtesting}, preparing  $\ket{\hat{k}}$ can be done with polynomially small error using $O(n)$ single-qubit gates, each with $O(\log n)$ bits of precision in their phase.

The same procedure could be used to test a black-box~$C$ for $F_N$ rather than for $F_N^{-1}$; the only difference is that we would prepare product state $F_N^{-1}\ket{k}$ at the start of each run rather than $F_N\ket{k}$.

While we do not estimate average infidelity averaged over arbitrary states, or over an arbitrary orthonormal basis, averaging over the particular basis of Fourier basis states turns out to be sufficient for our purposes as we'll see next.

\section{Using average-case-correct $F_N^{-1}$ for worst-case phase estimation}\label{sec:worsttoaverage}

We saw that it is hard to test a purported QFT or inverse-QFT black-box for \emph{worst-case} correctness, but relatively easy to test it for \emph{average-case} correctness on the set of QFT basis states.
Here we will show that average-case correctness actually suffices for phase estimation \emph{even in the worst case}.

\subsection{Basic phase estimation}\label{ssec:phasest}

As mentioned in the introduction, in the setup for phase estimation we can apply an $m$-qubit unitary $U$ in a controlled manner, and are given an eigenstate $\ket{\phi}$ of $U$ with eigenvalue $e^{2\pi i\theta}$ for some unknown $\theta\in[0,1)$.
The goal is to estimate~$\theta$ with roughly $n$ bits of precision.
We work on $n+m$ qubits that start in state $\ket{0^n}\otimes\ket{\phi}$.
We first apply $n$ Hadamard gates  to obtain the uniform superposition $\frac{1}{\sqrt{2^n}}\sum_{j\in\01^n}\ket{j}$ in the first register. Applying the $(n+m)$-qubit unitary
\begin{equation}\label{eq:VcontrolledU}
V=\sum_{j\in\01^n}\ketbra{j}{j}\otimes U^j
\end{equation}
gives a phase $e^{2\pi i j\theta}$ to state $\ket{j}\ket{\phi}$, where $j\theta$ is the product of $n$-bit integer $j\in\{0,\ldots,2^n-1\}$ and $\theta\in[0,1)$. This puts the first register into the state
\begin{equation}\label{eq:Fourierstate}
\frac{1}{\sqrt{2^n}}\sum_{j\in\01^n}e^{2\pi i j\theta}\ket{j}.
\end{equation}
The cost of $V$ is $O(2^n)$ controlled applications of~$U$.
We assume the above is implemented perfectly, or with very small error.
Now we apply (to the first register) a channel $C$ that implements $F_N^{-1}$ and we measure the resulting $n$-qubit state in the computational basis, hoping that the resulting $n$ bits give us the most significant bits of $\theta$.

The simplest situation arises when the initial state $\ket\phi$ in the second register is an eigenstate of $U$ with eigenphase $\theta=0.\theta_1\ldots\theta_n$ that requires only $n$ bits of precision. In this case the state of Eq.~(\ref{eq:Fourierstate}) is
$F_N\ket{\theta_1\ldots\theta_n}$. The inverse QFT will map this to $\ket{\theta_1\ldots\theta_n}$, and the final measurement will give us the bits $\theta_1\ldots\theta_n$ with certainty. However, two complications can arise.

First, the initial state $\ket{\phi}$ could be a \emph{superposition} of eigenstates of~$U$ (each with eigenphases requiring only~$n$ bits of precision) rather than one eigenstate. In this case phase estimation still gives useful results; the effect is the same as starting with a \emph{mixture} of eigenstates in the second register.
For example, if instead of one eigenstate $\ket{\phi}$ we start (in the second register) with a superposition $\alpha\ket{\phi}+\beta\ket{\phi'}$ of two normalised eigenstates, with distinct associated $n$-bit phases $\theta$ and $\theta'$, respectively,
then before the final measurement the state is $\alpha\ket{\theta_1\ldots\theta_n}\ket{\phi}+\beta\ket{\theta'_1\ldots\theta'_n}\ket{\phi'}$;
measuring the first $n$ qubits gives $\theta$ with probability $|\alpha|^2$ and gives $\theta'$ with probability $|\beta|^2$.

A second complication that can arise is when $\theta$ requires more than~$n$ bits of precision. In that case the state of Eq.~\eqref{eq:Fourierstate} to which we apply $C$  will be a superposition of $n$-qubit Fourier basis states: something of the form  $\sum_k \alpha_k\ket{\hat{k}}$ rather than one Fourier basis state. If $C$ is a perfect $F_N^{-1}$ then this doesn't matter: the resulting state after applying $C$ will be $\sum_k \alpha_k\ket{k}$. However, if channel~$C$ is an imperfect implementation of the inverse QFT, then interference between different terms could cause trouble, and we have to be careful about this in the next subsections. For now, let us record the useful fact that the state of Eq.~\eqref{eq:Fourierstate} is always dominated by a few Fourier basis states that correspond to good approximations of~$\theta$. Variants of this fact are already known (e.g.~\cite[Appendix~C]{cemm:revisited}) but for completeness we give a proof in the appendix.

\begin{proposition}\label{prop:fewQFTstates}
Let $N=2^n$, $\theta\in[0,1)$, and let coefficients $\alpha_k\in\mathbb{C}$ be such that 
\[
\frac{1}{\sqrt{2^n}}\sum_{j\in\01^n}e^{2\pi i j\theta}\ket{j}
=\sum_{k=0}^{N-1}\alpha_k\ket{\hat{k}}. 
\]
Let $k^*=\floor{2^n\theta}\in\{0,\ldots,N-1\}$ be the $n$-bit integer corresponding to the first $n$ bits in the binary expansion of $\theta$, and $S=\{k^*-K+1,\ldots,k^*-1,k^*,k^*+1,\ldots,k^*+K\}$ be the $2K$ integers ``around'' $2^n\theta$ (these integers should be taken mod~$N$). Then $\sum_{k\not\in S}|\alpha_k|^2$ can be made as small as we want by choosing $K$  sufficiently large (independent of $N$):
\begin{align*}
\sum_{k\not\in S}|\alpha_k|^2&
\leq \frac{1}{4}\left(\frac{1}{K}+\frac{1}{K-1}\right).
\end{align*}
\end{proposition}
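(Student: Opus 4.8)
The plan is to compute the coefficients $\alpha_k$ explicitly, recognise $|\alpha_k|^2$ as a Dirichlet-kernel expression, and then bound the tail sum by comparison with an integral. First I would exploit orthonormality of the Fourier basis to write $\alpha_k=\braket{\hat{k}}{\psi}$, where $\ket{\psi}$ denotes the given state. Substituting the definition of $\ket{\hat{k}}$ turns this into a geometric series: setting $\delta_k=\theta-k/N$, one gets $\alpha_k=\frac{1}{N}\sum_{j=0}^{N-1}e^{2\pi i j\delta_k}=\frac{1}{N}\cdot\frac{e^{2\pi i N\delta_k}-1}{e^{2\pi i\delta_k}-1}$, and hence $|\alpha_k|^2=\frac{\sin^2(\pi N\delta_k)}{N^2\sin^2(\pi\delta_k)}$.

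Next I would reindex around $k^*$. Writing $m=k-k^*$ and letting $\phi=N\theta-k^*=N\theta-\floor{N\theta}\in[0,1)$ be the fractional part of $N\theta$, we have $N\delta_k=\phi-m$, and since $m$ is an integer $\sin^2(\pi(\phi-m))=\sin^2(\pi\phi)$, so $|\alpha_{k^*+m}|^2=\frac{\sin^2(\pi\phi)}{N^2\sin^2(\pi(\phi-m)/N)}$. Representing the $N$ residues mod $N$ symmetrically (so that $|\phi-m|\le N/2$ throughout) lets me apply the elementary lower bound $|\sin(\pi y)|\ge 2|y|$ for $|y|\le 1/2$ to the denominator and $\sin^2(\pi\phi)\le 1$ to the numerator, yielding the clean estimate $|\alpha_{k^*+m}|^2\le \frac{1}{4(\phi-m)^2}$.

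I then sum over the complement of $S$, which in the new index is $\{m\ge K+1\}\cup\{m\le -K\}$. Each summand $1/(\phi-m)^2$ is monotone in $m$, so I bound each tail by the corresponding integral (extending the finite mod-$N$ sums to infinity only over-counts, so the bound remains valid), obtaining
\[
\sum_{k\not\in S}|\alpha_k|^2\le \frac14\left(\frac{1}{K-\phi}+\frac{1}{K-1+\phi}\right).
\]
Finally I maximise the right-hand side over $\phi\in[0,1)$: the two terms trade off, the derivative vanishes at $\phi=1/2$ (a minimum), and the maximum over $[0,1)$ is attained in the limit $\phi\to 0$ (equivalently $\phi\to 1$), where the bound equals $\frac14\left(\frac1K+\frac1{K-1}\right)$, as claimed.

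I expect the main obstacle to be bookkeeping rather than conceptual: handling the mod-$N$ wraparound so that the small-angle sine bound is valid across the entire complement (including near $m\approx\pm N/2$), and carrying out the optimisation over $\phi$ carefully enough to produce the exact asymmetric constant $\frac14\left(\frac1K+\frac1{K-1}\right)$ rather than a cruder symmetric $\frac{1}{2(K-1)}$. The geometric-series evaluation and the Dirichlet-kernel identity are routine.
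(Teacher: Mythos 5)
Your proposal is correct and follows essentially the same route as the paper's own proof: both compute $\alpha_k$ as a geometric series to get the Dirichlet-kernel form $|\alpha_k|^2=\frac{\sin^2(\pi x)}{N^2\sin^2(\pi(x-m)/N)}$, apply the small-angle bound $|\sin(\pi y)|\geq 2|y|$ for $|y|\leq 1/2$ to the denominator, compare the two tails with integrals to obtain $\frac{1}{4}\left(\frac{1}{K-x}+\frac{1}{K-1+x}\right)$, and maximise over the fractional part at the endpoints. The only differences are cosmetic (you drop $\sin^2(\pi\phi)\leq 1$ earlier and treat the two tails separately rather than pairing them into one sum), so nothing further is needed.
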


Note that every $k$ in the above set $S$ provides an approximation of $\theta$ with small additive error, because $|\theta-k/N|<K/N$ (mod~1). It will suffice to take $K=O(1)$ below. The above upper bound can probably be improved somewhat; in the appendix we also calculate numerical upper bounds on $\sum_{k\not\in S}|\alpha_k|^2$ for small values of~$K$. 

\subsection{Worst-case phase estimation via average-case-correct $F_N^{-1}$: $n$-bit case}\label{ssec:PEaveragenbitcase}

Now we switch to the scenario where we do not have a perfect inverse QFT available, but instead have a channel~$C$ which has small average infidelity w.r.t.\ $F_N^{-1}$ in the sense of Theorem~\ref{th:testQFTmixed}.
Note that if $C$ is usually close to $F_N^{-1}$ but not on the particular Fourier basis state $F_N\ket{\theta_1\ldots\theta_n}$ that we (approximately) prepared using~$V$, then recovering the particular phase $\theta$ that we are interested in may fail miserably, even if $\theta$ can be represented exactly with $n$ bits of precision and all the other components of phase estimation work perfectly.
In other words, an average-case-correct $F_N^{-1}$ does not guarantee that phase estimation works in the worst case, i.e., for each possible~$\theta$.
However, we can do a relatively simple worst-case-to-average-case reduction to deal with the situation that $C$ is not the perfect~$F_N^{-1}$.

Our idea is to choose a uniformly random offset $\lambda\in[0,1)$ that can be described with $n$ bits of precision, change the phase $\theta$ to $\theta'=\theta+\lambda$ mod~1, and then apply our purported $F_N^{-1}$ on the state and measure in the hope of obtaining an approximation of~$\theta'$, from which we acn then subtract $\lambda$ to obtain an approximation of~$\theta$ itself.
In this section we first describe what happens in the case where the unknown $\theta$ can be described exactly with $n$ bits; in the next section we deal with the more subtle general case where $\theta$ needs more than $n$ bits. Note that if $\theta$ can be described exactly by $n$ bits, then $\theta'=0.\theta'_1\ldots\theta'_n$ can be as well.
In particular, $\theta'_1\ldots\theta'_n$ is now a uniformly random $n$-bit string and $F_N\ket{\theta'_1\ldots \theta'_n}$ is a \emph{uniformly random Fourier basis state} (on which $C$ is likely to work well if its average infidelity w.r.t.\ $F_N^{-1}$ is small).

Let us first consider how to change the phase. One way to do this is to change $U$ to $U'=e^{2\pi i \lambda}U$. This has the effect that the unitary $V$ of Eq.~\eqref{eq:VcontrolledU}, with $U$ replaced by $U'$, induces an extra phase of $e^{2\pi i j \lambda}$ on basis state $\ket{j}$, resulting in 
\[
\frac{1}{\sqrt{2^n}}\sum_{j\in\01^n}e^{2\pi i j(\theta+\lambda)}\ket{j}
=\frac{1}{\sqrt{2^n}}\sum_{j\in\01^n}e^{2\pi i j \theta'}\ket{j}
=F_N\ket{\theta'_1\ldots\theta'_n}.
\]
However, a probably more efficient way to achieve the same is to leave $U$ as it is, and instead modify the $n$ Hadamard gates at the start of the phase estimation procedure. 
If we change the $\ell$th Hadamard to a single-qubit gate that maps
$$
\ket{0}\mapsto\frac{1}{\sqrt{2}}(\ket{0}+e^{2\pi i 2^{n-\ell}\lambda}\ket{1}),
$$
then the $n$-bit basis state $\ket{j}=\ket{j_1\ldots j_n}$ gets a phase 
\[
\prod_{\ell=1}^n e^{2\pi i j_\ell 2^{n-\ell}\lambda}=
e^{2\pi i (\sum_{\ell=1}^n j_\ell 2^{n-\ell})\lambda}=e^{2\pi i  j\lambda}.
\]
Thus the uniform superposition  $\frac{1}{\sqrt{2^n}}\sum_{j\in\01^n}\ket{j}$ that we prepared in the original phase estimation procedure now becomes
\[
\frac{1}{\sqrt{2^n}}\sum_{j\in\01^n}e^{2\pi i j \lambda}\ket{j}.
\]
Now we apply $V$ of Eq.~\eqref{eq:VcontrolledU} to multiply in the phases $e^{2\pi i  j\theta}$, and the $n$-qubit register becomes 
\[
\frac{1}{\sqrt{2^n}}\sum_{j\in\01^n}e^{2\pi i j (\theta+\lambda)}\ket{j}=F_N\ket{\theta'_1\ldots\theta'_n}.
\]
We have thus changed the phase from $\theta$ to $\theta'=\theta+\lambda$ as desired. Applying a perfect $F_N^{-1}$ would give us $\ket{\theta'_1\ldots\theta'_n}$ with certainty, from which we learn $\theta=\theta'-\lambda$ mod~1.

Now suppose we have a channel $C$ available that is not the perfect $F_N^{-1}$ but that has small average infidelity w.r.t.\ the perfect $F_N^{-1}$, averaged uniformly over the Fourier basis states~$\ket{\hat{k}}$:
\begin{equation}\label{eq:hifi}
\Exp_k[1-\,\bra{k}\,C(\ket{\hat{k}})\,\ket{k}\,]\leq \fidel,
\end{equation}
for some small constant $\fidel$. This could be tested by running the procedure of Theorem~\ref{th:testQFTmixed}.

Suppose we run $C$ on a specific Fourier basis state $\ket{\hat{k}}$, for some $k\in\01^n$ which would be the binary representation of the number~$\theta$. 
The intended $n$-bit outcome of a measurement in the computational basis on $C(\ket{\hat{k}})$ would be $k$. Define $\fidel_k=1-\bra{k}\,C(\ket{\hat{k}})\,\ket{k}$, which is the probability of not getting the intended outcome. 
It could be the case that we happen to run $C$ on a $\ket{\hat{k}}$ where $\fidel_k$ is particularly large; the error probability~$\fidel_k$ could even be~1 for some~$k$. In that case basic phase estimation would fail. However, we have $\Exp_{k'}[\fidel_{k'}]\leq \fidel$ by Eq.~\eqref{eq:hifi}. So if we do our worst-case-to-average-case reduction, shifting $\theta$ by a random $\lambda$ (equivalently, changing $k$ to a uniformly random $k'$ by adding a uniformly random $n$-bit integer $N\lambda$ to it, mod $N$), we obtain the following theorem:

\begin{theorem}[case where $\theta$ is an  $n$-bit number]\label{th:boundnbits}
Let $C$ be a channel from $n$ qubits to $n$ qubits with average infidelity $\leq\fidel$ w.r.t.\ $F_N^{-1}$ (in the sense of Theorem~\ref{th:testQFTmixed}). 
Let $\theta=0.\theta_1\ldots\theta_n\in\{0,1/2^n,\ldots,(2^n-1)/2^n\}$ be fixed, and $\lambda\in\{0,1/2^n,\ldots,(2^n-1)/2^n\}$ be uniformly random.
If we apply $C$ to state $\frac{1}{\sqrt{2^n}}\sum_{j\in\01^n}e^{2\pi i j(\theta+\lambda)}\ket{j}$, measure in the computational basis, and subtract $\lambda$ from the measurement outcome, then we get $\theta$ except with probability $\leq\fidel$.
\end{theorem}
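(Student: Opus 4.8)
The plan is to observe that shifting the unknown phase by a uniformly random $n$-bit offset $\lambda$ turns the single fixed Fourier basis state on which $C$ might behave badly into a \emph{uniformly random} Fourier basis state, so that the error probability averages out to exactly the average infidelity $\fidel$. The argument is essentially a direct unpacking of the setup, and the only real content is checking that the random shift induces a uniform distribution on the relevant index.

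First I would identify the state that $C$ acts on. Writing $\theta'=\theta+\lambda\bmod 1$, which is again an $n$-bit number since $\theta$ and $\lambda$ both are, let $k'=2^n\theta'\in\{0,\ldots,N-1\}$ be the corresponding $n$-bit integer. By the definition of $F_N$, the state $\frac{1}{\sqrt{2^n}}\sum_{j\in\01^n}e^{2\pi i j\theta'}\ket{j}$ equals exactly the Fourier basis state $\ket{\hat{k'}}=F_N\ket{k'}$. Thus the procedure applies $C$ to $\ket{\hat{k'}}$ and measures the result in the computational basis.

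Next I would compute the error probability conditioned on a fixed value of $\lambda$ (equivalently, a fixed $k'$). Subtracting $\lambda$ from the measurement outcome returns $\theta$ precisely when the outcome equals $k'$, since $k'/2^n-\lambda=\theta\bmod 1$ and both $k'$ and the outcome lie in $\{0,\ldots,N-1\}$. The probability of obtaining outcome $k'$ is $\bra{k'}C(\ket{\hat{k'}})\ket{k'}=1-\fidel_{k'}$, where $\fidel_{k'}=1-\bra{k'}C(\ket{\hat{k'}})\ket{k'}$ as in the preceding discussion. Hence the conditional error probability is exactly $\fidel_{k'}$.

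Finally I would average over $\lambda$. As $\lambda$ ranges uniformly over $\{0,1/2^n,\ldots,(2^n-1)/2^n\}$, the integer $N\lambda$ ranges uniformly over $\{0,\ldots,N-1\}$, so $k'=(k+N\lambda)\bmod N$ (with $k=2^n\theta$ fixed) is uniformly distributed over $\{0,\ldots,N-1\}$. Therefore the overall error probability is $\Exp_\lambda[\fidel_{k'}]=\Exp_{k'}[\fidel_{k'}]\leq\fidel$ by the average-infidelity bound of Eq.~(\ref{eq:hifi}). Since every step is an exact equality until the final inequality, there is no substantive obstacle beyond careful bookkeeping; the one point deserving explicit attention is that adding a uniformly random element modulo $N$ to the fixed summand $k$ is a bijection, which is what guarantees that $k'$ is genuinely uniform and lets us replace the worst-case index by the average.
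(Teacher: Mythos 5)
Your proposal is correct and follows essentially the same route as the paper: the paper's argument (given in the text leading up to the theorem) likewise identifies the shifted state as the Fourier basis state $\ket{\hat{k'}}$ with $k'=k+N\lambda \bmod N$, notes that the random shift makes $k'$ uniformly distributed, and bounds the error probability by $\Exp_{k'}[\fidel_{k'}]\leq\fidel$ via Eq.~\eqref{eq:hifi}. Your explicit check that subtraction of $\lambda$ recovers $\theta$ exactly when the outcome equals $k'$, and that adding a uniform element mod $N$ is a bijection, are the same bookkeeping points the paper treats more informally.
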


As long as $\fidel<1/2$, we can reduce the error probability to an arbitrarily small $\delta$ by $O(\log(1/\delta))$ repetitions.

The above theorem is for the  basic case where  the second register contains one eigenstate $\ket{\phi}$ of~$U$, and the corresponding eigenphase can be described exactly with $n$ bits of precision.
Let us consider again the two complications mentioned near the end of Section~\ref{ssec:phasest}. The first complication is where the state $\ket{\phi}$ in the second register is a \emph{superposition} of multiple eigenstates of $U$ rather than one eigenstate, but still assuming the eigenphases can all be decribed with at most $n$ bits.
In this case we may treat the first register as containing a mixture of different states, and still use Theorem~\ref{th:boundnbits}. We will typically be using the estimated eigenphase to approximate some quantity of interest. If the obtained eigenphase would give a good approximation to that quantity with probability at least $p$ in the case of a perfect $F_N^{-1}$, then it will still give a good approximation with probability at least $p(1-\fidel)$ in the case of our imperfect channel~$C$; 
that is the situation we will be in for the application to  period-finding in Section~\ref{ssec:periodfinding}.
The second complication mentioned in Section~\ref{ssec:phasest} (when the eigenphases need more than $n$ bits) is more subtle, and we deal with it next.

\subsection{Worst-case phase estimation via average-case-correct $F_N^{-1}$: general case}\label{ssec:PEaveragegeneralcase}

If we run channel~$C$ on a \emph{superposition} of Fourier basis states in the \emph{first} register, then interference effects could occur between the different parts of the superposition when $C$ is applied, and we have to be more careful. This is the situation when we do phase estimation for the case where the phase $\theta$ needs more than $n$ bits of precision, as mentioned at the end of Section~\ref{ssec:phasest}. We will now analyse what happens in this case in more detail. 

Fortunately, by Proposition~\ref{prop:fewQFTstates} the state $\ket{\psi}$ on which we apply $C$ will still be dominated by $O(1)$ Fourier basis states, each of which corresponds to a good approximation of $\theta$. 
Choose integer $K$ and let $S$ be the set of $2K$ elements of  Proposition~\ref{prop:fewQFTstates}. 
We can write $\ket{\psi}$ as
\[
\ket{\psi}=\sum_{k\in S}\alpha_k\ket{\hat{k}}+\alpha_\rho\ket{\rho},
\]
where $\ket{\rho}$ is the normalised ``rest'' of the state (which consists of the non-$S$ Fourier basis states and hence is orthogonal to the $\ket{\hat{k}}$ with $k\in S$), and $|\alpha_\rho|^2=1-\sum_{k\in S}|\alpha_k|^2$ is small, depending on our choice of~$K$.

The phase estimation procedure ends by measuring $C(\ket{\psi})$ in the computational basis.
Every $k\in S$ will be a good measurement outcome, in the sense of corresponding to a good approximation of~$\theta$. If $C$ were perfect then $C(\ket{\psi})$ would be $\sum_k\alpha_k\ket{k}$, and measuring in the computational basis would give a good outcome~$k$ with probability $\sum_{k\in S}|\alpha_k|^2$, which is close to~1 by Proposition~\ref{prop:fewQFTstates}. However, if $C$ is not perfect then we have to worry about the interaction between the errors that $C$ makes on the different parts of the state.
The following easy lemma implies that measuring a state~$\ket{\psi}$ in a basis in which $\ket{\psi}$ has small support, cannot increase probabilities by too much.

\begin{lemma}\label{lem:measuringsmallsupport}
If $\displaystyle\ket{\psi}=\sum_{j\in T} \beta_j\ket{j}$ (not necessarily normalised), then 
$\displaystyle
\ketbra{\psi}{\psi}\preceq |T|\sum_{j\in T} |\beta_j|^2\ketbra{j}{j}.
$
\end{lemma}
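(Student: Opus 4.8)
The plan is to reduce this operator inequality to a scalar inequality using the standard characterization of the positive semidefinite (Löwner) order: $A\preceq B$ holds if and only if $\bra{v}A\ket{v}\leq\bra{v}B\ket{v}$ for every vector $\ket{v}$. So I would fix an arbitrary $\ket{v}$, write $v_j=\braket{j}{v}$ for its computational-basis components, and compare the two quadratic forms directly. For the left-hand side, $\bra{v}\ketbra{\psi}{\psi}\ket{v}=|\braket{v}{\psi}|^2$, and since $\ket{\psi}$ is supported on $T$ we have $\braket{v}{\psi}=\sum_{j\in T}\beta_j\overline{v_j}$. For the right-hand side, $\bra{v}\bigl(|T|\sum_{j\in T}|\beta_j|^2\ketbra{j}{j}\bigr)\ket{v}=|T|\sum_{j\in T}|\beta_j|^2|v_j|^2$.

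The single substantive step is then one application of the Cauchy--Schwarz inequality, pairing the sequence $(\beta_j\overline{v_j})_{j\in T}$ with the all-ones sequence of length $|T|$:
\[
\left|\sum_{j\in T}\beta_j\overline{v_j}\right|^2\leq\left(\sum_{j\in T}1\right)\left(\sum_{j\in T}|\beta_j|^2|v_j|^2\right)=|T|\sum_{j\in T}|\beta_j|^2|v_j|^2.
\]
This is exactly the claimed domination of quadratic forms, and since $\ket{v}$ was arbitrary the operator inequality $\ketbra{\psi}{\psi}\preceq|T|\sum_{j\in T}|\beta_j|^2\ketbra{j}{j}$ follows.

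There is essentially no obstacle here; the only ``choice'' to get right is the factorization fed into Cauchy--Schwarz, and the factor $|T|$ appearing in the statement is precisely the signal to split off the constant $1$ from each term so that $\sum_{j\in T}1=|T|$ produces the stated constant. It is worth recording that this constant is tight: taking all $\beta_j$ equal and $\ket{v}\propto\ket{\psi}$ (so that $v_j\propto\beta_j$ on $T$) saturates Cauchy--Schwarz, making both quadratic forms equal. This is the source of the multiplicative $|T|$ loss that the subsequent analysis must absorb by keeping the support set $S$ small (i.e.\ $K=O(1)$).
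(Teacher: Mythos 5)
Your proof is correct and is essentially identical to the paper's: both reduce the operator inequality to quadratic forms against an arbitrary vector and apply Cauchy--Schwarz with the all-ones sequence to produce the factor $|T|$. Your tightness remark is also sound (the paper makes a slightly more general version of it, showing optimality of the constant whenever all $\beta_j$ are nonzero), so there is nothing to fix.
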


\begin{proof}
Let $M$ be the matrix on the right-hand side, and consider an arbitrary state $\ket{\phi}=\sum_{j} \gamma_j\ket{j}$. Using Cauchy-Schwarz, we have
\[
\bra{\phi}(\ketbra{\psi}{\psi})\ket{\phi}=\left|\sum_{j\in T}\beta_j^*\gamma_j\right|^2
\leq |T|\sum_{j\in T}|\beta_j|^2|\gamma_j|^2=\bra{\phi}M\ket{\phi},
\]
which implies the lemma.
\end{proof}

\noindent
The prefactor $|T|$ on the right-hand side is optimal whenever all $\beta_j$'s are non-zero, as follows. Define $B=\sum_{j\in T}1/|\beta_j|^2$ and $\ket{\phi}=\frac{1}{\sqrt{B}}\sum_{j\in T}\frac{1}{\beta_j^*}\ket{j}$. Then $\bra{\phi}
\left(\sum_{j\in T} |\beta_j|^2\ketbra{j}{j}\right)\ket{\phi}=|T|/B$ while $\bra{\phi}(\ketbra{\psi}{\psi})\ket{\phi}=|\braket{\phi}{\psi}|^2=|T|^2/B$.
Hence the factor-$|T|$ on the right-hand side is necessary.

\medskip

In order to upper bound the probability of obtaining a bad outcome when measuring $C(\ketbra{\psi}{\psi})$ in the computational basis, let $P_{\rm bad}=\sum_{k\not\in S}\ketbra{k}{k}$ be the projector on the bad outcomes. Define $\fidel_k=1-\bra{k}\,C(\ket{\hat{k}})\,\ket{k}$, which is the probability of getting a measurement outcome other than $k$ when measuring $C(\ket{\hat{k}})$. We have $\Exp_k[\fidel_k]\leq \fidel$ by Eq.~\eqref{eq:hifi}.
For $k\in S$, we have $\Tr(\ketbra{k}{k}\,C(\ket{\hat{k}}))+\Tr(P_{\rm bad}\,C(\ket{\hat{k}}))\leq\Tr(C(\ket{\hat{k}})) =1$, hence $\Tr(P_{\rm bad}\,C(\ket{\hat{k}}))\leq \fidel_k$.

First consider the case that we do not shift $\theta$ by a random $\lambda$.
Below we will use Lemma~\ref{lem:measuringsmallsupport} twice, and also use the fact that the channel $C$ is linear and preserves the positive semidefinite (psd) ordering\footnote{This follows because $C$ is linear and positivity-preserving: if $\sigma\preceq\rho$, then $0\preceq C(\rho-\sigma)=C(\rho)-C(\sigma)$ and hence $C(\sigma)\preceq C(\rho)$. The latter in turn implies (in fact is equivalent to) the property that $\Tr(M\,C(\sigma))\leq \Tr(M\, C(\rho))$ for all psd operators~$M$.}, to upper bound the probability that the final measurement outcome lies outside of $S$.
\begin{align*}
    \Pr[\mbox{bad outcome}] & = \Tr\left(P_{\rm bad}\,C(\ketbra{\psi}{\psi})\right)\\
    \intertext{First apply Lemma~\ref{lem:measuringsmallsupport} with $T$ containing 2 states: $\ket{\rho}$ and the normalised version of  $\ket{\psi_S}=\sum_{k\in S}\alpha_k\ket{\hat{k}}$, so that   $\displaystyle\ket{\psi}=\norm{\ket{\psi_S}}\frac{\ket{\psi_S}}{\norm{\ket{\psi_S}}}+\alpha_\rho\ket{\rho}$. We obtain:}
     \Pr[\mbox{bad outcome}]     & \leq \Tr\left(P_{\rm bad}\,C\left(
        2\norm{\ket{\psi_S}}^2
        \frac{\ketbra
 {\psi_S}{\psi_S}}{\norm{\ket{\psi_S}}^2}+2|\alpha_\rho|^2\ketbra{\rho}{\rho}\right)
        \right)\\
   & = 2\Tr\left(P_{\rm bad}\,C\left(\ketbra{\psi_S}{\psi_S}\right)\right)\,+\,2|\alpha_\rho|^2\Tr\left(P_{\rm bad}\,C\left(\ketbra{\rho}{\rho}\right)\right)\\
       \intertext{Apply Lemma~\ref{lem:measuringsmallsupport} to $\ket{\psi_S}$ with $T=S$, and for the second term use $\Tr\left(P_{\rm bad}\,C\left(\ketbra{\rho}{\rho}\right)\right)\leq 1$:}
     & \leq 2\Tr\left(P_{\rm bad}\,C\left(|S|\sum_{k\in S} |\alpha_k|^2\ketbra{\hat{k}}{\hat{k}}\right)\right)+2|\alpha_\rho|^2\\
          & = 2|S|\left(\sum_{k\in S}|\alpha_k|^2 \, \Tr\left(P_{\rm bad}\,C(\ket{\hat{k}})\right)\right) + 2|\alpha_\rho|^2\\
     & \leq 2|S|\left(\sum_{k\in S}|\alpha_k|^2\fidel_k\right)+2|\alpha_\rho|^2.
\end{align*}

The latter upper bound on the probability of a bad outcome could be large if the $k\in S$ happen to be among the few $k$'s that have large $\fidel_k$-values. However, the average over all $N$ $\fidel_k$-values is small: at most~$\fidel$.
Now consider the case where we do use our worst-case-to-average-case reduction, shifting all $k$ by the same uniformly random $n$-bit integer $N\lambda$, and similarly shift the notion of a ``bad outcome''. Then we can upper bound the overall probability of a bad outcome by using linearity of expectation, as follows:
\begin{align}
    \Exp_{\lambda}\left[\Pr[\mbox{bad outcome}]\right] 
    & \leq \Exp_{\lambda}\left[2|S|\left(\sum_{k\in S}|\alpha_k|^2\fidel_{k+N\lambda}\right)+2|\alpha_\rho|^2
    \right]\nonumber \\ 
    & = 2|S|\left(\sum_{k\in S}|\alpha_k|^2\, \Exp_{\lambda}\left[\fidel_{k+N\lambda}\right]\right)+2|\alpha_\rho|^2\nonumber \\ 
        & \leq 2|S|\left(\sum_{k\in S}|\alpha_k|^2 \fidel\right)+2|\alpha_\rho|^2\nonumber \\  & = 2|S|(1-|\alpha_\rho|^2)\fidel + 2|\alpha_\rho|^2\nonumber \\
 & = 2|S|\fidel + 2(1-|S|\fidel)|\alpha_\rho|^2.\label{eq:errmorethannbits}
\end{align}
Note the tradeoff here:  $|S|=2K$ increases with our choice of~$K$, while $|\alpha_\rho|^2$ decreases.

We summarize the above calculations in a theorem, using 
Proposition~\ref{prop:fewQFTstates} to bound $|\alpha_\rho|^2$:

\begin{theorem}[case where $\theta$ may need more than $n$ bits of precision]\label{th:bound1}
Let $C$ be a channel from $n$ qubits to $n$ qubits with average infidelity $\leq \fidel$ w.r.t.\ $F_N^{-1}$ (in the sense of Theorem~\ref{th:testQFTmixed}). 
Let $\theta\in[0,1)$ be fixed, and $\lambda\in\{0,1/2^n,\ldots,(2^n-1)/2^n\}$ be uniformly random.
If we apply $C$ to state $\frac{1}{\sqrt{2^n}}\sum_{j\in\01^n}e^{2\pi i j(\theta+\lambda)}\ket{j}$, measure in the computational basis,
and subtract $\lambda$ from the measurement outcome, then for every integer $K\geq 1$, the probability to get an outcome $k\in\{0,\ldots,N-1\}$ such that $|\theta-k/N|>K/N$ (mod~1) is at most
\[
4K\fidel + (1/2-K\fidel)\left(\frac{1}{K}+\frac{1}{K-1}\right).
\]
\end{theorem}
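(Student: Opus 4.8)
The plan is to simply assemble the pieces already in hand, since almost all of the work is done by the chain of inequalities culminating in Eq.~\eqref{eq:errmorethannbits}, which bounds the expected bad-outcome probability by $2|S|\fidel + 2(1-|S|\fidel)|\alpha_\rho|^2$. So the proof amounts to (i) identifying the event ``bad outcome'' with the event named in the theorem statement, (ii) substituting $|S|=2K$, and (iii) bounding $|\alpha_\rho|^2$ using Proposition~\ref{prop:fewQFTstates}.

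First I would justify that an outcome $k$ lies outside $S$ exactly when it is a bad estimate of $\theta$. By the definition of $S$ as the $2K$ integers around $2^n\theta$ (mod~$N$) and the remark following Proposition~\ref{prop:fewQFTstates}, each $k\in S$ satisfies $|\theta-k/N|<K/N$ (mod~1), so the complementary event $k\notin S$ is precisely the event $|\theta-k/N|>K/N$ (mod~1) whose probability we must control. Thus the left-hand side of Eq.~\eqref{eq:errmorethannbits}, averaged over the random shift $\lambda$, is exactly the quantity in the theorem, and I may read off the bound from that display.

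Next I would plug in. With $|S|=2K$ we have $2|S|\fidel=4K\fidel$, and Proposition~\ref{prop:fewQFTstates} gives $|\alpha_\rho|^2=\sum_{k\notin S}|\alpha_k|^2\leq \frac14\left(\frac1K+\frac1{K-1}\right)$. Substituting this into the second term of Eq.~\eqref{eq:errmorethannbits} and collecting factors yields $2(1-|S|\fidel)|\alpha_\rho|^2\leq (1/2-K\fidel)\left(\frac1K+\frac1{K-1}\right)$, which combined with the first term gives exactly the claimed bound.

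The one subtlety — and the only place I would slow down — is that replacing $|\alpha_\rho|^2$ by its upper bound is legitimate only when the coefficient $2(1-|S|\fidel)=2(1-2K\fidel)$ is nonnegative, i.e.\ when $K\fidel\leq 1/2$; when it is negative, an upper bound on $|\alpha_\rho|^2$ no longer yields an upper bound on the term. The careful point is therefore to dispose of the remaining regime $K\fidel>1/2$, which I expect to be vacuous: there the first term $4K\fidel$ already exceeds $2$, and since the factor $1/K+1/(K-1)$ is bounded (by $3/2$ for $K\geq 2$, the only values for which it is even finite), a one-line estimate shows the whole stated expression stays above $1\geq\Pr[\text{bad outcome}]$, so the inequality holds trivially. (For $K=1$ the bound is $+\infty$ and there is nothing to prove.) With this case distinction handled, the theorem follows.
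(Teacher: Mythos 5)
Your proposal is correct and follows essentially the same route as the paper: Theorem~\ref{th:bound1} is stated there as a summary of the calculation culminating in Eq.~(\ref{eq:errmorethannbits}), and the proof consists precisely of identifying the bad event with the complement of~$S$, substituting $|S|=2K$, and inserting the bound on $|\alpha_\rho|^2$ from Proposition~\ref{prop:fewQFTstates}, just as you do. Your additional case analysis for $K\fidel>1/2$ (where the coefficient $2(1-2K\fidel)$ is negative, so upper-bounding $|\alpha_\rho|^2$ would not be legitimate, but the stated bound then exceeds~$1$ and holds vacuously) is a minor point of rigor that the paper leaves implicit, not a different approach.
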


For example, if we use $K=4$ (considering the $|S|=2K=8$ $k$'s that are closest to $2^n\theta$ to be the good measurement outcomes) and we set $\fidel\leq 0.015$, then the probability of a bad outcome will be $<0.497$. Thus the probability of obtaining a good approximation of the phase~$\theta$ is $>0.503$, and we can amplify this success probability to be close to~1 by taking the median of multiple runs of this procedure. 

We can improve these bounds on the tolerable average  infidelity~$\fidel$ a bit by plugging the numerical upper bounds on $|\alpha_\rho|^2$ from the end of the appendix into Eq.~(\ref{eq:errmorethannbits}), taking $N=2^{10}$ for concreteness (for other values of $N$ the numbers are very similar).
With $K=4$ and using $|\alpha_\rho|^2\approx 0.05$ from the appendix, the probability of a bad outcome is $\leq 16\fidel + 2(1-8\fidel)0.05$, 
which is $<0.5$ as long as the average infidelity $\fidel$ is $\leq 0.026$.
With $K=3$, using $|\alpha_\rho|^2\approx 0.067$ from the appendix, the probability of a bad outcome is $\leq 12\fidel + 2(1-6\fidel)0.067$, 
which is $<0.5$ as long as the average infidelity $\fidel$ is $\leq 0.032$.
With $K=2$, using  $|\alpha_\rho|^2\approx 0.099$ from the appendix, we can tolerate $\fidel\leq 0.041$.
Even with these numerical improvements, the tolerable $\fidel$ is still much smaller here than in the case where $\theta$ is an $n$-bit number (Theorem~\ref{th:boundnbits}): there we could tolerate any $\fidel<1/2$.

\section{Applications: period-finding and amplitude estimation}\label{sec:applications}

\subsection{Period-finding}\label{ssec:periodfinding}

Fix $N=2^n$.\footnote{To avoid confusion with Shor's algorithm: our $N$ here is the dimension of the QFT, not an integer to be factored.} For fixed \emph{period} $r<N$ and variable \emph{offset} $s\in\{0,\ldots,r-1\}$, define periodic state
\[
\ket{\pi_{s}}=\frac{1}{\sqrt{p}}\sum_{z=0}^{p-1}\ket{s+zr},
\]
where $p=|\{z: 0\leq s+zr<N\}|$ is the number of basis states occurring in the superposition. This $p$ will be $N/r$ rounded up; 
$N$ has $\floor{N/r}$ ``complete'' sequences of $r$ indices followed by one ``incomplete'' sequence of $N-r\floor{N/r}$ indices.
Shor~\cite{shor:factoring} showed via classical number theory that the ability to find the period~$r$ given such a state suffices for factoring integers, and gave an efficient quantum algorithm for period-finding. Subsequently Kitaev~\cite{kitaev:stabilizer} showed how to do period-finding using phase estimation, and then Cleve, Ekert, Mosca, and Macchiavello~\cite{cemm:revisited} showed that Shor's and Kitaev's approaches to period-finding are basically the same.

In order to do period-finding via phase estimation starting from a periodic state, we let $U$ be the ``$+1 \bmod N$'' operator:
\[
U\ket{x}=\ket{x+1\bmod N}.
\]
It is easily verified that the eigenstates of $U$ are the states $F_N^{-1}\ket{j}$ with corresponding eigenvalue~$\omega_N^j=e^{2\pi i j/N}$. Note that $n$ bits of precision suffice for each of these eigenphases.

Because these eigenstates form a basis, any state $\ket{\phi}$ can be written as a superposition $\sum_{j=0}^{N-1} \alpha_j F_N^{-1}\ket{j}$ of eigenstates of~$U$ with some coefficients~$\alpha_j$. We already saw how phase estimation using $U$ acts when we start with such a superposition in the second register: with probability $|\alpha_j|^2$ it returns the $n$-bit number~$j/N$ exactly.
We now determine these $\alpha_j$ coefficients for the case where our starting state is the periodic state $\ket{\pi_s}=\sum_j \alpha_j F_N^{-1}\ket{j}$, by multiplying $\ket{\pi_s}$ with $F_N$ (we do this only as a calculational device; we are not implementing the QFT physically):
\begin{align*}
\sum_{j=0}^{N-1}\alpha_j \ket{j} & = F_N\ket{\pi_s}
  = \frac{1}{\sqrt{p}}\sum_{z=0}^{p-1}F_N\ket{s+zr}\\
  & = \frac{1}{\sqrt{p}}\sum_{z=0}^{p-1}\frac{1}{\sqrt{N}}\sum_{j=0}^{N-1}\omega_N^{j(s+zr)}\ket{j}
  = \sum_{j=0}^{N-1}\underbrace{\frac{\omega_N^{js}}{\sqrt{pN}}\sum_{z=0}^{p-1}\omega_N^{jzr}}_{\alpha_j}\ket{j}.
\end{align*}
We now want to show that the amplitude is concentrated around integer multiples of $N/r$.
First consider the special case where $N/r$ happens to be an integer. If $j=cN/r$ for some integer $c\in\{0,\ldots,r-1\}$, then  we have $\omega_N^{jzr}=1$ for all $z$ and hence $|\alpha_j|^2=p/N=1/r$; there are $r$ such $j$'s, each with squared amplitude $1/r$, so the $j$'s that are not integer multiples of $N/r$ will have amplitude~0 in this case.

In the general case where $N/r$ is not an integer, let $j$ be the closest integer to $cN/r$ for some $c\in\{0,\ldots,r-1\}$ (i.e., $j=cN/r+\delta$ for $\delta\in(-1/2,1/2]$). Then 
\begin{align*}
|\alpha_j|^2
& =\frac{1}{pN}\left|\sum_{z=0}^{p-1}\omega_N^{jzr}\right|^2
=\frac{1}{pN}\frac{|1-\omega_N^{pjr}|^2}{|1-\omega_N^{jr}|^2}
=\frac{1}{pN}\frac{|1-e^{2\pi i p\delta r/N}|^2}{|1-e^{2\pi i \delta r/N}|^2}
=\frac{1}{pN}\frac{\sin(\pi p \delta r/N)^2}{\sin(\pi \delta r/N)^2}\\
& \geq\frac{1}{pN}\frac{(\frac{2}{\pi}\pi p \delta r/N)^2}{(\pi \delta r/N)^2}=\frac{4p}{\pi^2 N}\geq\frac{4}{\pi^2 r},
\end{align*}
using $\frac{2}{\pi}x\leq\sin(x)\leq x$ for $x\in[0,\pi/2]$, and assuming $\delta\neq 0$. If indeed $\delta\neq 0$, then the probability that the measurement outcome~$j$ is one of the two integers in the interval $(cN/r-1,cN/r+1)$ is $\geq 8/(\pi^2 r)$.\footnote{It need not be the case that each of the two outcomes $\ceil{cN/r}$ and $\floor{cN/r}$ has probability $\geq 4/(\pi^2 r)$, because one of them will correspond to a $\delta\not\in(-1/2,1/2]$. However, the sum of these two probabilities is at least $8/(\pi^2 r)$, which may be verified by noting that the function $f(x)=\sin(\pi x)^2/\sin(\pi x/p)^2+\sin(\pi (1-x))^2/\sin(\pi(1-x)/p)^2$ is at least $8p^2/\pi^2$ on the interval $x\in[0,1/2]$. We note that this calculation is similar to that in the appendix; here, however, we have a superposition of eigenstates each with $n$-bit eigenphase; in the appendix similar calculations are needed to deal with the situation that the eigenphase needs more than $n$ bits of precision.} If $\delta=0$, then $|\alpha_j|^2=p/N\geq 1/r$.
Accordingly, with probability $\geq 8/\pi^2$ our measurement outcome~$j$ is $cN/r$ (rounded up or down) for some random integer $c\in\{0,\ldots,r-1\}$.
Note that in that case we have $|j/N - c/r|< 1/N$, so the known ratio $j/N$ is a very good approximation to the unknown ratio $c/r$.
Shor showed that if $c$ and $r$ are coprime (which, by classical number theory, happens with largish probability $\Omega(1/\log\log r)$), then continued-fraction expansion on the known ratio $j/N$ yields the period~$r$ (as mentioned, this suffices for factoring).

Using our worst-case to average-case reduction, this method for period-finding still works when we only have a good-on-average inverse QFT for our phase estimation, provided the initial periodic state $\ket{\pi_s}$ is prepared sufficiently well and the other components of phase estimation (the unitary $V$ from Eq.~(\ref{eq:VcontrolledU}) and the $n$ modified Hadamard gates) also work sufficiently well.
Note that the relevant eigenphases can all be described by $n$ bits here, so we only need to refer to the result in Section~\ref{ssec:PEaveragenbitcase} and not to the more complicated result in Section~\ref{ssec:PEaveragegeneralcase}, where the tolerable $\fidel$ is worse.
By the result in Section~\ref{ssec:PEaveragenbitcase}, since the probability to find a good $j$ is $8/\pi^2$ using a perfect inverse QFT, this probability is still $\geq (1-\fidel)8/\pi^2$ if we instead use a channel~$C$ that has average infidelity $\leq\fidel$ w.r.t.\ the perfect inverse QFT.

\subsection{Amplitude estimation}

Suppose we have a unitary quantum algorithm $A$ that maps
\[
\ket{0^m}\mapsto A\ket{0^m}= \sin(\mu)\ket{\phi_1}\ket{1}+
\cos(\mu)\ket{\phi_0}\ket{0},
\]
for some arbitrary normalised states $\ket{\phi_1}$ and $\ket{\phi_0}$, and some angle $\mu\in[0,\pi/2]$. We would like to estimate the angle $\mu$ or (which comes to the same thing) the amplitude $\sin(\mu)$. 
Such ``amplitude estimation'' can be used for instance for optimal quantum approximate counting~\cite{bhmt:countingj} and is a key component of many other quantum algorithms that involve estimating various quantities. 
For completeness we here sketch the elegant method of Brassard et al.~\cite{bhmt:countingj} that reduces amplitude estimation to phase estimation. Consider the unitary
\[
U=AR_0A^{-1}(I\otimes Z),
\]
where $R_0=2\ketbra{0^m}{0^m}-I$ reflects about $\ket{0^m}$. This $R_0$ can be implemented using a circuit with $O(m)$ elementary gates.
Consider how $U$ acts in the 2-dimensional space $\cal S$ spanned by 
$\ket{\phi_1}\ket{1}$ and
$\ket{\phi_0}\ket{0}$.\footnote{It is very helpful to picture this similarly to the usual analysis of Grover's algorithm, in a 2d plane where the vertical axis corresponds to the state $\ket{\phi_1}\ket{1}$ and the horizontal axis corresponds to
$\ket{\phi_0}\ket{0}$.}
$U$ is the product of two reflections:
first $I\otimes Z$ reflects about $\ket{\phi_0}\ket{0}$, and then 
$AR_0A^{-1}$ reflects about $A\ket{0^m}$. This product of two reflections corresponds (in the space $\cal S$) to a rotation over twice the angle $\mu$ that exists between $\ket{\phi_0}\ket{0}$ and $A\ket{0^m}= \sin(\mu)\ket{\phi_1}\ket{1}+
\cos(\mu)\ket{\phi_0}\ket{0}$. Such a rotation over angle $2\mu$ has two eigenvectors in~$\cal S$, with respective eigenvalues $e^{i2\mu}$ and $e^{-i2\mu}$.

How do we use phase estimation to estimate $\mu$?
We start in state $A\ket{0^m}$, which lies in $\cal S$ and hence is some linear combination of the two eigenvectors (with unknown coefficients, but that doesn't matter).
If we run phase estimation, then the output will either be an estimate of $\mu/\pi$ or of  $-\mu/\pi$ (or rather, $1-\mu/\pi$). Since we assumed $\mu\in[0,\pi/2]$, the phase $\mu/\pi$ that we're estimating lies in $[0,1/2]$.
If our estimate is in $[0,1/2]$ then we'll assume it's $\mu/\pi$, and if our estimate is in $[1/2,1)$ then we'll assume it's $1-\mu/\pi$. 
Either way we obtain a good estimate of $\mu$. 

This method still works with an only-good-on-average channel for $F_N^{-1}$ if we do our worst-case to average-case reduction.
If we want $n$ bits of precision in our estimate of $\mu$, then the cost will be $O(2^n)$ applications of~$U$. Note that here we need the more complicated result of Section~\ref{ssec:PEaveragegeneralcase}, since  the eigenphases $\pm\mu/\pi$ may require more than $n$ bits of precision.

\section{Summary and future work}

In this paper we did two things.
First, we showed that one can efficiently test whether a given $n$-qubit channel~$C$ implements the inverse QFT well, on average over all Fourier basis states. Our procedure estimates (with success probability $\geq 1-\delta$) the average (in)fidelity up to $\pm\eps$ using $O(\log(1/\delta)/\eps^2)$ runs, each of which uses $O(n)$ single-qubit gates to prepare a product state, one run of $C$, and a measurement in the computational basis.

Second, we showed that such an average-case-correct inverse QFT suffices to implement phase estimation \emph{in the worst case}.
This implies that an average-case-correct inverse QFT also suffices for period-finding (as in Shor's algorithm) and for amplitude estimation, provided the other components of those procedures work sufficiently well.

Practical methods of verification for large numbers of qubits will be vital for future quantum computers. Here we have shown that such verification is possible for several key algorithmic primitives.
We feel it would be very interesting to find more examples of worst-case-to-average-case reductions for quantum computing. 

One more example would be the quantum algorithm of Jordan~\cite{jordan:gradient} for computing the gradient of a differentiable function $f:\R^d\to\R$, i.e., the vector of the $d$ partial derivatives of $f$ evaluated at a given point. Let's say we want to approximate the gradient with $n$ bits of precision for each of its $d$ entries. Jordan's algorithm sets up a uniform superposition over some grid of inputs $(x_1,\ldots,x_d)$ in $d$ $n$-qubit registers, then computes $f$ once in the phase, and then applies an inverse QFT to each of the $d$ registers. If $f$ is affine-linear, i.e., there are coefficients $a,b_1,\ldots,b_d\in\R$ such that  $f(x)=a+b_1x_1+\cdots+b_dx_d$ for all $x\in\R^n$, and each $b_i$ can be represented exactly with $n$ bits of precision, then the final state gives $b_1,\ldots,b_d$ which is exactly the gradient of $f$. If some of the $b_i$'s need more than $n$ bits of precision then at the end of the algorithm the $i$th register contains $b_i$ with high probability. This algorithm ``costs'' only one $f$-evaluation, $d$ runs of the $n$-qubit inverse QFT, and $O(dn)$ other elementary gates.
In contrast, classical algorithms need $d$ $f$-evaluations to compute the gradient.
In the more general case where $f$ is fairly smooth but only close to an affine-linear function on the grid of points that we feed into it, a much more complicated analysis due to Gily\'{e}n, Arunachalam, and Wiebe~\cite{GAW:gradient} determines the complexity of approximating the gradient in various norms, in terms of how many times $f$ needs to be evaluated.

What if we only have a good-on-average inverse QFT available?
If $f$ is affine-linear then one may confirm that the analysis of our Section~\ref{ssec:PEaveragegeneralcase} implies that Jordan's algorithm can still be made to work via our worst-case-to-average-case reduction.
If $f$ is only close to affine-linear then it is not clear what happens, and whether the more subtle analysis of~\cite{GAW:gradient} can also be made to work with an imperfect inverse QFT. We leave this question to further work.

\subsection*{Acknowledgements.}
We thank Timothy Browning for helpful discussions regarding some number theory for Shor's algorithm, and the anonymous Quantum referees for very helpful comments that substantially improved the presentation and rigour of the paper.

NL was partially supported by the UK Engineering and Physical Sciences Research Council through grants EP/R043957/1, EP/S005021/1, EP/T001062/1.
RdW was partially supported by the Dutch Research Council (NWO) through Gravitation-grant Quantum Software Consortium, 024.003.037, and through QuantERA ERA-NET Cofund project QuantAlgo 680-91-034.

This paper did not involve any underlying data.

\bibliographystyle{alphaUrlePrint}
\bibliography{qc}

\appendix

\section{Proof of Proposition \ref{prop:fewQFTstates}}

Let $N=2^n$, $\theta\in[0,1)$, and let coefficients $\alpha_k\in\mathbb{C}$ be such that 
\begin{equation}
\frac{1}{\sqrt{N}}\sum_{j\in\01^n}e^{2\pi i j\theta}\ket{j}
=\sum_{k=0}^{N-1}\alpha_k\ket{\hat{k}}. \label{eq:fourier1}
\end{equation}
Let $k^*=\floor{N\theta}\in\{0,\ldots,N-1\}$ be the $n$-bit integer corresponding to the first $n$ bits in the binary expansion of $\theta$, $x=N\theta-k^*\in[0,1)$ be the fractional part of $N\theta$. Let $S=\{k^*-K+1,\ldots,k^*-1,k^*,k^*+1,\ldots,k^*+K\}$ be the set of $2K$ integers ``around'' $2^n\theta$ (these integers should be taken mod~$N$). 
These $2K$ integers come in $K$ pairs: $\{k^*,k^*+1\}, \{k^*-1,k^*+2\}, \{k^*-2,k^*+3\},\ldots, \{k^*-K+1,k^*+K\}$.

Our goal is to show that the total probability $P=\sum_{k\in S}|\alpha_k|^2$ of terms in the set $S$ is close to~1. By applying inverse QFT to both sides of Eq.~\eqref{eq:fourier1} and rewriting the geometric series, we have
\[ 
\alpha_k= \frac{1}{N} \frac{e^{2\pi i (N\theta-k)}-1}{e^{2\pi i (N\theta-k)/N}-1} \mbox{, hence }
| \alpha_k|^2= \frac{1}{N^2} \frac{\sin^2(\pi (N\theta-k))}{\sin^2(\pi  (N\theta-k)/N)}=\frac{1}{N^2} \frac{\sin^2(\pi x)}{\sin^2(\pi  (N\theta-k)/N)}.
\]
So
\begin{align} \label{eq:Pseries}
P= \frac{1}{N^2} \sum_{k=k^*-K+1}^{k^*+K}
\frac{\sin^2(\pi x)}{\sin^2(\pi  (k^*-k+x)/N)}
= 
\frac{1}{N^2}\sum_{k=-K+1}^{K}
\frac{\sin^2(\pi x)}{\sin^2(\pi  (x-k)/N)}
\end{align}

\bigskip

We will now prove an upper bound on the probability of error (the analysis is similar to that in \cite{nielsen&chuang:qc} section 5.2.1). 
\begin{align} 
1-P & =
\frac{1}{N^2}\sum_{k=-N/2+1}^{-K}
\frac{\sin^2(\pi x)}{\sin^2(\pi  (x-k)/N)}
+
\frac{1}{N^2}\sum_{k=K+1}^{N/2}
\frac{\sin^2(\pi x)}{\sin^2(\pi  (x-k)/N)}
\nonumber \\
&\leq 
\frac{\sin^2(\pi x)}{4}\Big(\sum_{k=-N/2+1}^{-K}
\frac{1}{(x-k)^2} +
\sum_{k=K+1}^{N/2}
\frac{1}{(x-k)^2}
\Big)\nonumber\\
&=
\frac{\sin^2(\pi x)}{4} \sum_{k=K+1}^{N/2}
\Big(\frac{1}{(k-x)^2}  + \frac{1}{(k-1+x)^2} 
\Big)\nonumber\\
&\leq
\frac{\sin^2(\pi x)}{4}
 \int_{K}^{N/2}dt
\Big(\frac{1}{(t-x)^2}  + \frac{1}{(t-1+x)^2}\Big)
\nonumber\\
&\leq
\frac{\sin^2(\pi x)}{4}
 \int_{K}^{\infty}dt
\Big(\frac{1}{(t-x)^2}  + \frac{1}{(t-1+x)^2}\Big)
\nonumber\\
&=
\frac{\sin^2(\pi x)}{4}
\Big(\frac{1}{(K-x)}  + \frac{1}{(K-1+x)}\Big)
\nonumber\\
&\leq
\frac{1}{4}
\Big(\frac{1}{(K-x)}  + \frac{1}{(K-1+x)}\Big)
\nonumber\\
&\leq
\frac{1}{4}
\Big(\frac{1}{K}  + \frac{1}{K-1}\Big).
\label{eq:provable-bound1}
\end{align}

The final inequality is easy to see because the function is maximised on $[0,1]$ at the two endpoints of that interval ($x=0$ or $x=1$).
We use this bound in Proposition~\ref{prop:fewQFTstates} in the body of the paper.

\bigskip

In fact, numerical evidence suggests that the above rigorous bound is rather loose. The probability of Eq.~(\ref{eq:Pseries}) is symmetric around $x=1/2$, because replacing $x$ by $1-x$ leaves $P$ invariant (replacing $x$ by $1-x$ interchanges the two squared amplitudes $|\alpha_{k^*+k}|^2$ and $|\alpha_{k^*-k+1}|^2$, leaving the sum of the pair invariant).
Hence $P$ has a local optimum at $x=1/2$. Plotting the graphs for small $K$ strongly suggests that  $P$ is actually minimised at $x=1/2$.  If this is indeed the case, then (using the symbol $\tilde P$ to denote this conjectured bound)
\begin{align} 
\tilde P&\geq
\frac{1}{N^2}\sum_{k=-K+1}^{K}
\frac{1}{\sin^2(\pi  (\frac{1}{2}-k)/N)}\nonumber \\
&=
\frac{1}{N^2}\sum_{k=-K}^{K-1}
\frac{1}{\sin^2(\pi  (\frac{1}{2}+k)/N)}\nonumber \\
&\geq \frac{4}{\pi^2}\sum_{k=-K}^{K-1}
\frac{1}{(2k+1)^2}\nonumber \\
&= \frac{8}{\pi^2}\sum_{k=0}^{K-1}
\frac{1}{(2k+1)^2}
\nonumber \\
&= \frac{8}{\pi^2}\Big(\sum_{k=0}^{\infty}
\frac{1}{(2k+1)^2}
-\sum_{k=K}^{\infty}  
\frac{1}{(2k+1)^2}\Big)\nonumber\\
&= \frac{8}{\pi^2}\Big(\frac{\pi^2}{8}
-\sum_{k=K}^{\infty}  
\frac{1}{(2k+1)^2}\Big)
\nonumber\\
&\geq \frac{8}{\pi^2}\Big(\frac{\pi^2}{8}
-\int_{K-1}^\infty \frac{dy}{(2y+1)^2}\Big)
\nonumber\\
&=1-\frac{4}{\pi^2(2K-1)}
\label{eq:PLB}
\end{align}
Numerical evidence shows that this  asymptotic bound is reasonably close to the sum for moderate-sized and large~$K$, but still somewhat off for small~$K$:
\begin{itemize}
\item For $K=2$, the proven upper bound for the error of Eq.~(\ref{eq:provable-bound1}) is 0.375, whereas the upper bound for the error using (\ref{eq:PLB}) is 0.135, and the numerical evaluation of the exact formula (\ref{eq:Pseries}) for $1-P$ for $K=2$, e.g.\ $N=2^{10}$ and $x=1/2$, is 0.099.
\item For $K=3$, the proven upper bound for the error of Eq.~(\ref{eq:provable-bound1}) is 0.208, whereas the upper bound for the error using (\ref{eq:PLB}) is 0.081, and the numerical evaluation of the exact formula (\ref{eq:Pseries}) for $1-P$ for $K=3$, e.g.\ $N=2^{10}$ and $x=1/2$, is 0.067.
\item For $K=4$, the proven upper bound for the error of Eq.~(\ref{eq:provable-bound1}) is 0.146, whereas the upper bound for the error using (\ref{eq:PLB}) is 0.058, and the numerical evaluation of the exact formula (\ref{eq:Pseries}) for $1-P$ for $K=4$, e.g.\ $N=2^{10}$ and $x=1/2$, is 0.050.
\end{itemize}

\end{document}